\documentclass{article}
\usepackage{amsmath}
\usepackage{amsfonts}
\usepackage{amssymb}
\usepackage{amsthm}

\def\l{\lambda}
\def\p{\partial}
\def\e{\mathrm{e}}
\newtheorem{theorem}{Theorem}
\newtheorem{prop}{Proposition}

\newtheorem*{corollary}{Corollary}
\newcommand{\dbar}{\bar{\partial}}

\newcommand{\be}{\begin{equation}}
\newcommand{\ee}{\end{equation}}
\newcommand{\bea}{\begin{eqnarray}}
\newcommand{\eea}{\end{eqnarray}}
\newcommand{\beaa}{\begin{eqnarray*}}
\newcommand{\eeaa}{\end{eqnarray*}}

\newcommand{\nn}{\nonumber}
\renewcommand{\d}{\mathrm{d}}

\usepackage{authblk}
\begin{document}
\title{Matrix extension of the 
Manakov-Santini system
and integrable
chiral model on Einstein-Weyl background}
\author{L.V. Bogdanov \thanks{leonid@itp.ac.ru}}
\affil{L.D. Landau ITP RAS, Moscow, Russia}
\date{}
\maketitle
\begin{abstract} 
It was demonstrated recently \cite{DFK14} 
that
the Manakov-Santini system describes a  
local form of general 
Lorentzian
Einstein-Weyl geometry.
We introduce integrable matrix extension of the
Manakov-Santini system and show that it 
describes (2+1)-dimensional integrable chiral model
in Einstein-Weyl space. We develop 
a dressing scheme for the extended MS system
and define an extended hierarchy. Matrix
extension of  Toda type
system connected with another local form
of Einstein-Weyl geometry 
is also considered.
\end{abstract}
\section{Introduction}
It was demonstrated recently 
\cite{DFK14} 
that local form of important integrable 
geometric structures
connected with conformal self-duality
is described in general case by  dispersionless
integrable systems (twistor 
integrability of these geometric
structures was established in \cite{Penrose}, \cite{Atiah}).
A crucial observation made in 
\cite{DFK14} is that
any real conformal anti-self-dual (ASD) structure 
in signature (2,2) (or general complex 
analytic conformal ASD structure)
can be locally
represented
by a metric
\bea
\label{ASDmetric1}
{\textstyle\frac{1}{2}}g=dwdx-dzdy-F_y dw^2-(F_x-G_y)dwdz + G_xdz^2,
\eea
where the functions $F$, $G$ satisfy the 
compatibility conditions $[X_1,X_2]=0$
for the 
dispersionless
Lax pair represented by vector fields
containing the derivative over
spectral parameter $\lambda$
(introduced in \cite{BDM07} in the context of
dispersionless integrable hierarchies, see also \cite{LVB11})
\bea
X_1=\p_z-\lambda\p_x
+F_x\p_x+G_x\p_y+ f_1\p_\lambda,
\nn
\\
X_2=\p_w- \lambda\p_y + F_y\p_x+G_y\p_y+f_2\p_\lambda,
\label{new_lax}
\eea
which can be written as a coupled system of third order
PDEs
\bea
&&
\p_x(Q(F))+\p_y(Q(G))=0,
\nn\\
&&
(\p_w+F_y\p_x+G_y\p_y)Q(G)+(\p_z+F_x\p_x+G_x\p_y)Q(F)=0,
\label{sd_3rd}
\eea
where
\[
Q=\p_w\p_x-\p_z\p_y+F_y{\p_x}^2-G_x{\p_y}^2-
(F_x-G_y)\p_x\p_y,
\]
and, 
due to compatibility conditions, $f_1$ and $f_2$ are 
expressed through $F$ and $G$,
\beaa
&&
f_1=-Q(G),\quad f_2=Q(F).
\eeaa

In \cite{LVB17} we have introduced a matrix extension of
Lax pair (\ref{new_lax}),
\be
\begin{split}
&
\nabla_{X_1}=\p_z-\lambda\p_x
+F_x\p_x+G_x\p_y+ f_1\p_\lambda + A_1,
\\
&
\nabla_{X_2}=\p_w- \lambda\p_y + F_y\p_x+G_y\p_y+f_2\p_\lambda + A_2,
\end{split}
\label{extLax}
\ee 
where gauge field components $A_1$, $A_2$ do not 
depend on $\lambda$ and take their values in some (matrix)
Lie algebra.
Lax pairs of this structure 
(without derivative over $\lambda$) were already present 
in the seminal work of
Zakharov and Shabat \cite{ZS} (1979). The commutation relation splits into (scalar) vector
field part, which is the same as for unextended Lax pair,
and matrix part
\be
\begin{split}
&
\partial_x A_2=\partial_y A_1,
\\
&
(\p_z
+F_x\p_x+G_x\p_y)A_2 - 
(\p_w+ F_y\p_x+G_y\p_y)A_1 + [A_1,A_2]=0,
\end{split}
\label{SDYM}
\ee
which contains the coefficients of the metric
as some kind of `background'. 

The following
statement proved in \cite{LVB17}
demonstrates that scalar background 
in equations (\ref{SDYM})
has a direct geometric sense.
\begin{prop} 
\label{prop_ASDYM}
Equations (\ref{SDYM}) represent 
anti-self-dual Yang-Mills
(ASDYM) equations 
for the background conformal 
structure (\ref{ASDmetric1}) 
(in a special gauge).
\end{prop}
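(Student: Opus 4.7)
The plan is to invoke the twistor (Penrose--Ward) characterisation of ASDYM: a connection $A$ on a conformal ASD four-manifold is ASDYM if and only if its curvature vanishes on every self-dual null $2$-plane ($\beta$-plane), equivalently if and only if, for a $\l$-dependent frame spanning the $\beta$-plane distribution, the associated covariant derivatives commute for each $\l$. Since by \cite{DFK14} the scalar Lax pair (\ref{new_lax}) already encodes the $\beta$-plane distribution of the metric (\ref{ASDmetric1}) through the commutation $[X_1,X_2]=0$, it remains to match the matrix part of $[\nabla_{X_1},\nabla_{X_2}]=0$, namely (\ref{SDYM}), with the vanishing of the curvature of a spacetime gauge field on $\beta$-planes in a suitable gauge.

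First I will produce a null tetrad adapted to the distribution. Completing differentials in (\ref{ASDmetric1}) gives $\tfrac{1}{2} g = \theta^1\theta^3 - \theta^2 \theta^4$ with coframe $\theta^1 = dw$, $\theta^2 = dz$, $\theta^3 = dx - F_y\,dw - F_x\,dz$, $\theta^4 = dy - G_y\,dw - G_x\,dz$, and dual null frame $e_1 = \p_w + F_y\p_x + G_y\p_y$, $e_2 = \p_z + F_x\p_x + G_x\p_y$, $e_3 = \p_x$, $e_4 = \p_y$. The $\beta$-plane at parameter $\l$ is then $\mathrm{span}(e_2 - \l e_3,\;e_1 - \l e_4)$, which is exactly the span of the projected Lax vector fields $\tilde X_i := X_i - f_i\p_\l$ from (\ref{new_lax}).

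Next I fix the \emph{special gauge}. For a connection $A = A_\mu\,dx^\mu$ on $M$ (lifted trivially to the correspondence space $M \times \mathbb{CP}^1$ with $A_\l = 0$), the pairings $A(\tilde X_i)$ depend on $\l$ through the $e_3, e_4$ components. To match the $\l$-independent setup of (\ref{extLax}) I impose $A_x = A_y = 0$, which together with $A_\l = 0$ can be achieved locally using $\l$-dependent gauge transformations on $M \times \mathbb{CP}^1$; then $A_1 := A(\tilde X_1) = A_z$ and $A_2 := A(\tilde X_2) = A_w$. The $\beta$-plane curvature $F(\tilde X_1,\tilde X_2) = \tilde X_1(A_2) - \tilde X_2(A_1) - A([\tilde X_1,\tilde X_2]) + [A_1,A_2]$ is then easy to evaluate: a direct check shows that $[\tilde X_1,\tilde X_2]$ has vanishing $\p_w,\p_z$ components and that its $\p_x,\p_y$ components are killed by $A$ in this gauge, so $A([\tilde X_1,\tilde X_2])=0$. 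The remaining expression is linear in $\l$, and its $\l^1$ and $\l^0$ coefficients reproduce precisely the two equations of (\ref{SDYM}). The Penrose--Ward criterion then delivers the proposition.

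The delicate point I anticipate is the gauge argument itself: the two conditions $A_x = A_y = 0$ would appear to exceed the single matrix function's worth of freedom available for a spacetime gauge transformation, and the resolution is to work on the correspondence space and exploit the residual $\l$-dependent freedom under the convention $A_\l = 0$. Once this bookkeeping is settled, the identification of coefficients in $\l$ is routine and the rest of the proof is mechanical.
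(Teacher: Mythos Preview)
Your null tetrad and the identification of the $\beta$-plane distribution with the projected Lax vectors $\tilde X_i$ are correct, and the matching of the $\lambda^1$ and $\lambda^0$ coefficients of $F(\tilde X_1,\tilde X_2)$ with the two equations of (\ref{SDYM}) goes through as you say. The present paper does not actually reproduce a proof of this proposition (it refers to \cite{LVB17}); the proof it does give for the analogous Proposition~2 proceeds by exactly the frame computation you set up, reading off components of the curvature in the vielbein and using one of them to fix the gauge.

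The genuine gap is your gauge step. A $\lambda$-dependent gauge transformation $g=g(\lambda,x,y,z,w)$ on $M\times\mathbb{CP}^1$ cannot simultaneously enforce $A_x=A_y=0$ and preserve $A_\lambda=0$: the condition $A_\lambda=0$ after the transformation forces $\partial_\lambda g=0$, so you are back to ordinary spacetime gauge freedom, which is one matrix function and cannot kill two components of $A$ in general. (Equivalently, the $\partial_\lambda$ terms in $X_i$ produce $g^{-1}f_i\,\partial_\lambda g$, so a genuinely $\lambda$-dependent $g$ makes the transformed $A_i$ depend on $\lambda$, destroying the form (\ref{extLax}).) The correct resolution is the one the paper uses in its proof of Proposition~2: the $\beta$-plane at $\lambda=\infty$ is $\mathrm{span}(e_3,e_4)=\mathrm{span}(\partial_x,\partial_y)$, so the ASDYM condition already includes $F_{xy}=\partial_xA_y-\partial_yA_x+[A_x,A_y]=0$. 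This flatness on each $(x,y)$-leaf is exactly what permits a single \emph{spacetime} gauge transformation with $A_x=A_y=0$. Conversely, in that gauge $F_{xy}=0$ holds trivially, and your $\lambda^1,\lambda^0$ identification supplies the remaining ASDYM content. Replace your correspondence-space argument with this observation and the proof is complete.
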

Different reductions of ASDYM equations give rise to
integrable background geometries which are connected to
dispersionless integrable systems. The current picture
of the field and many examples are provided in  
\cite{Calderbank}.

In (2+1)-dimensional case conformal 
ASD structure reduces to Einstein-Weyl 
geometry, and it was proved in \cite{DFK14}
that for Lorentzian signature
it is locally described by the Manakov-Santini 
system \cite{MS06,MS07}
\be
\begin{split}
u_{xt} &= u_{yy}+(uu_x)_x+v_xu_{xy}-u_{xx}v_y,
\\
v_{xt} &= v_{yy}+uv_{xx}+v_xv_{xy}-v_{xx}v_y
\end{split}
\label{MSeq}
\ee
with the Lax pair 
\be
\begin{split}
X_1&=\partial_y-(\l-v_{x})\partial_x + u_{x}\partial_\lambda,
\\
X_2&=\partial_t-((\l^2-v_{x}\l+u -v_{y})\partial_x
+(u_{x}\l+u_{y})\partial_\lambda
\end{split}
\label{MSLax}
\ee
In \cite{LVB17} we presented matrix 
extension of the Lax pair
(\ref{MSLax}),
\be
\begin{split}
&
\nabla_{X_1}=
\partial_y-(\lambda-v_{x})\partial_x + u_{x}\partial_\lambda + A,
\\
&
\nabla_{X_2}=\partial_t-(\lambda^2-v_{x}\lambda + u -v_{y})\partial_x
+(u_{x}\lambda+u_{y})\partial_\lambda +\lambda A + B,
\end{split}
\label{MSLaxext}
\ee
leading to the matrix system on the
background of the Manakov-Santini system
\be
\begin{split}
&
\p_y A - \p_x B=0,
\\
&
(\p_y+v_x\p_x)B-(\p_t+(v_y-u)\p_x) A + u_x A +[A,B]=0
\end{split}
\label{MSext}
\ee
For the potential $K$, $A=K_x$, $B=K_y$
we  have
\bea
K_{tx}-K_{yy} - [K_x,K_y] - \p_x(u K_x) + v_y K_{xx} - v_x K_{xy}=0,
\label{mono}
\eea
where $u$, $v$ satisfy Manakov-Santini system describing
Einstein-Weyl geometry. For trivial background $u=v=0$
this equation represents one of the forms of
integrable chiral model \cite{Ward}, \cite{Dun}.

In Section \ref{Chir} we shall demonstrate
that
{\em system (\ref{MSext}) represents the 
first-order Yang-Mills-Higgs
system introduced
by Ward \cite{Ward}, defining integrable 
chiral model, on Einstein-Weyl background in the form 
described by the Manakov-Santini system}. 

In Section \ref{Ext} we develop 
the techique of matrix extension of dispersionless
integrable hierarchies introduced in \cite{LVB17}
for the case of the Manakov-Santini hierarchy,
formulate the dressing scheme and derive Lax-Sato equations
for the extended hierarchy.

Another general local form of Lorentzian Einstein-Weyl
structure \cite{DFK14}
is given by 
two-component generalization of the dispersionless 2DTL equation \cite{LVB10}
\be
\begin{split}
(\mathrm{e}^{-\phi})_{tt}&=m_t\phi_{xy}-m_x\phi_{ty},
\\
m_{tt}\mathrm{e}^{-\phi}&=m_{ty}m_x-m_{xy}m_t
\end{split}
\label{gen2DTL}
\ee
with the Lax pair 
\be
\begin{split}
X_1 &=\partial_x-
\bigl(
\lambda+ \frac{m_x}{m_t}
\bigr)
\partial_t + 
\lambda
\bigl(
\phi_t \frac{m_x}{m_t}-\phi_x
\bigr)
\partial_\lambda,
\\
X_2 &=\partial_y-\frac{1}{\lambda}\frac{\mathrm{e}^{-\phi}}{m_t}\partial_t -
\frac{(\mathrm{e}^{-\phi})_t}{m_t}\partial_\lambda.
\end{split}
\label{Laxgen2DTL}
\ee
For $m=t$ the system (\ref{gen2DTL}) reduces to the dispersionless 2DTL equation
\bea
(\mathrm{e}^{-\phi})_{tt}=\phi_{xy},
\label{d2DTL}
\eea
System (\ref{gen2DTL}) doesn't preserve the symmetry of the dispersionless 2DTL equation 
with respect 
to $x$, $y$ variables, however, it is possible 
to introduce symmetric generalizations of
the d2DTL equation, including elliptic case with
complex variables $z$, $\bar z$ instead of $x$, $y$
\cite{LVB10}.
It is an interesting question whether it is possible
to represent Einstein-Weyl geometry in Euclidean
signature in this way.

Lax pair (\ref{Laxgen2DTL}) posesses
a natural matrix extension
\be
\begin{split}
\nabla_{X_1}&=\partial_x-(\lambda+ \frac{m_x}{m_t})\partial_t + 
\lambda(\phi_t \frac{m_x}{m_t}-\phi_x)\partial_\lambda
+A
\\
\nabla_{X_2}&=\partial_y-\frac{1}{\lambda}\frac{\mathrm{e}^{-\phi}}{m_t}\partial_t -
\frac{(\mathrm{e}^{-\phi})_t}{m_t}\partial_\lambda
+\frac{1}{\lambda}B,
\end{split}
\ee
leading to matrix equations
\be
\begin{split}
&\p_t B +\p_y A=0,
\\
&(\partial_x-\frac{m_x}{m_t}\partial_t)B
+ \frac{\mathrm{e}^{-\phi}}{m_t}\partial_t A
-(\phi_t \frac{m_x}{m_t}-\phi_x)B + [A,B]=0,
\end{split}
\label{TodaM} 
\ee
or, in terms of potential $K$,
$A=K_t$, $B=-K_y$,
\be
\e^{-\phi}K_{tt} + {m_x}K_{yt} - {m_t}K_{xy}
+(\phi_t {m_x}-\phi_x {m_t})K_y +{m_t}[K_y,K_t]=0.
\ee
This equation on trivial background $m=t$, $\phi=1$
coincides (up to a change of variables) with equation
(\ref{mono}) on trivial background,
representing integrable chiral model \cite{Dun}.

It is a natural conjecture that system (\ref{TodaM}),
similar to system (\ref{MSext}), represents
an integrable chiral model on Einstein-Weyl background.
Though we will not provide a proof in the present work,
it should be completely analogous to the case of 
Manakov-Santini system. In Section \ref{Ext1} we develop 
the techique of matrix extension of dispersionless
integrable hierarchies 
for the case of Toda-type hierarchy connected
with the system (\ref{gen2DTL}),
formulate the dressing scheme and derive Lax-Sato equations
for the extended hierarchy.
\section{Integrable chiral model on the background
of Einstein-Weyl geometry}
\label{Chir}
Let us recall the results relating Einstein-Weyl geometry
and the Manakov-Santini system \cite{DFK14}.

EW geometry on a three-dimensional manifold $M^3$ consists of a conformal structure $[g]$
and a symmetric connection
${\mathbb{D}}$ compatible with $[g]$ in the sense that, 
for any $g\in [g]$, \ 
\[
{\mathbb {D}}g=\omega\otimes g
\]
for some covector $\omega$, and such that
the trace-free part of the symmetrized Ricci tensor of ${\mathbb{D}}$
vanishes. 
\begin{theorem}[Dunajski, Ferapontov and Kruglikov (2014)]
\label{theo_ms}
There exists a local coordinate system $(x, y, t)$ on $M^3$ such that
any Lorentzian Einstein-Weyl structure is locally of the form
\begin{gather}
\begin{split}
g &= -(dy + v_x dt )^2 +4(dx + (u - v_y ) dt ) dt,
\\
\omega &= v_{xx} dy+(-4u_x + 2v_{xy} +v_xv_{xx})dt,
\end{split}
\label{metricMS}
\end{gather}
where the functions  $u$ and $v$ on $M^3$ satisfy a coupled system of second-order PDEs
\begin{equation}
P(u) - u_x^2=0, \quad P(v)=0,
\label{MS}
\end{equation}
where 
\be
P=\p_x\p_t-\p_y^2+(v_y-u){\p_x}^2 - v_x{\p_x\p_y}. 
\label{metricMS1}
\ee
\end{theorem}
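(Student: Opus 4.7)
The plan is to prove the two directions separately: first, that the metric and one-form (\ref{metricMS}) with $u,v$ satisfying (\ref{MS}) produce a valid Lorentzian Einstein-Weyl structure; second, that every such structure locally admits coordinates in this form.

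For the forward direction I would exploit the dispersionless Lax pair (\ref{MSLax}). The vector fields $X_1(\lambda), X_2(\lambda)$ are rational in $\lambda$; for each $\lambda$ their projections to $TM$ (discarding the $\p_\lambda$ component) span a 2-plane $\Pi_\lambda \subset TM$, and a direct check shows that $\Pi_\lambda$ is null with respect to $g$ in (\ref{metricMS}) for every $\lambda$. Since a Lorentzian conformal structure in signature $(2,1)$ is determined by its null cone, this fixes $[g]$, and the integrability condition $[X_1,X_2] \in \mathrm{span}(X_1,X_2)$ is equivalent to the Manakov-Santini system (\ref{MS}). It then remains to form the Weyl connection $\mathbb{D} = \nabla^g + \frac{1}{2}(\omega\otimes\mathrm{Id} + \mathrm{Id}\otimes\omega - g\otimes\omega^\sharp)$ with the prescribed $\omega$, compute its Ricci tensor, symmetrize, remove the trace, and verify the vanishing modulo (\ref{MS}). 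A conceptual shortcut is that $\omega$ itself can be read off from the $\p_\lambda$-coefficients $f_1, f_2$ of the Lax pair via the recipe identifying totally geodesic null surfaces with leaves of the distribution $\Pi_\lambda$, which sidesteps much of the direct Ricci calculation.

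For the reverse direction I would invoke the Cartan-Hitchin twistor characterization of Einstein-Weyl geometry: an EW structure on $M^3$ is equivalent to a 2-parameter family of totally geodesic null surfaces, one through each point in each null direction. In Lorentzian signature these are real and foliate $M^3 \times \mathbb{RP}^1$, producing a $\lambda$-family of null 2-planes that play the role of $\Pi_\lambda$ above. Choosing coordinates adapted to this foliation---a coordinate $x$ along one null congruence, a transverse coordinate $y$, and a parameter $t$ along the second family---and using residual diffeomorphism freedom brings the annihilator of the foliation into the polynomial-in-$\lambda$ form of (\ref{MSLax}), from which the coefficient functions are identified with derivatives of $u,v$ and the metric recovered in the form (\ref{metricMS}).

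The main obstacle is the reverse direction, specifically the gauge-fixing argument showing that the residual coordinate freedom is exactly sufficient to normalize the general $\lambda$-dependent null distribution to the stated polynomial form and no more. The forward verification, while lengthy, is essentially mechanical; the genuine geometric input is the existence of the real 2-parameter family of null totally geodesic surfaces on a Lorentzian EW manifold, which is where the structural content of the theorem actually resides.
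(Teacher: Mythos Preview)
The paper does not contain a proof of this theorem. It is stated with attribution to Dunajski, Ferapontov and Kruglikov and cited to reference \cite{DFK14}; immediately afterwards the author only remarks that the forward direction (solutions of (\ref{MS}) yield EW structures of the form (\ref{metricMS})) had already appeared in \cite{Dun3}, while the converse was the new content of \cite{DFK14}. No argument for either direction is reproduced here, so there is nothing in the present paper against which to compare your attempt.

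That said, your outline is a reasonable sketch of how the result is actually established in the original source: the forward direction is indeed a direct verification organised around the null 2-planes cut out by the projected Lax vector fields, and the converse in \cite{DFK14} proceeds, as you anticipate, through the Cartan--Hitchin/Jones--Tod characterisation of Einstein--Weyl spaces by a two-parameter family of totally geodesic null surfaces, followed by a coordinate normalisation that brings the associated $\lambda$-dependent distribution into the polynomial form (\ref{MSLax}). Your identification of the gauge-fixing step as the nontrivial part is accurate; in \cite{DFK14} this is handled by first putting the conformal structure into a canonical quadratic-in-$\lambda$ form using adapted null coordinates and then absorbing the remaining freedom into the functions $u$ and $v$. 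But again, none of this is carried out in the paper under review.
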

System (\ref{MS}) coincides with  Manakov-Santini system (\ref{MSeq}).
It was shown in \cite{Dun3} that any solution to 
(\ref{MS}) gives rise to an EW structure of the form (\ref{metricMS}),
but the question whether all EW structures arise in that way has remained 
open.

Let us introduce
a gauge field (potential) ${A}$, which is a one-form
taking its values in some (matrix) Lie algebra, and
the two-form
${F}=\d{A}+{A}\wedge{A}$
(connection curvature, field intensity) and consider 
the equation
\bea
D\Phi+\tfrac{1}{2}\omega \Phi=*{F},
\label{Ward}
\eea
where $D\Phi=\d\Phi+[{A},\Phi]$, $\Phi$ is a function
taking values in the Lie algebra (Higgs field, \cite{Dun}).
This equation for Minkowski metric coincides the
Yang-Mills-Higgs system
introduced by Ward \cite{Ward}, \cite{Dun}, leading to integrable
chiral model. Equation (\ref{Ward}) 
represents an integrable Ward system
on Einstein-Weyl geometry background,
the term  $\tfrac{1}{2}\omega \Phi$ is responsible
for correct behavior of equation (\ref{Ward})
under conformal transformation $g\rightarrow f g$,
$\Phi\rightarrow f^{-\frac{1}{2}}\Phi$.
\begin{prop} There exists a gauge for which 
equation (\ref{Ward}) for Einstein-Weyl structure
(\ref{metricMS})
takes the form
of matrix extension of MS system (\ref{MSext}), (\ref{mono}).
For this gauge $\Phi=-A_y$, $A_x=0$, $A=2A_y$,
$B=A_t-v_x A_y$.
\end{prop}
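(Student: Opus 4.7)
The plan is to expand the Ward equation (\ref{Ward}) componentwise in the coordinates $(x,y,t)$ adapted to the metric (\ref{metricMS}), use the residual gauge freedom to kill the $x$-component of the one-form gauge field, and then read off three scalar equations by matching the coefficients of $dx$, $dy$, $dt$ on both sides of (\ref{Ward}). Writing the resulting one-form as $\mathcal{A}=A_y\,dy+A_t\,dt$ (subscripts labelling one-form components, not derivatives), the curvature is
\begin{equation*}
\mathcal{F}=(\partial_x A_y)\,dx\wedge dy+(\partial_x A_t)\,dx\wedge dt+(\partial_y A_t-\partial_t A_y+[A_y,A_t])\,dy\wedge dt,
\end{equation*}
so three natural coefficients are immediate, and the commutator on the right is already the candidate for the $[A,B]$ term in (\ref{MSext}).

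The preparation step is the Hodge star for (\ref{metricMS}). From $g$ I would read off $g_{xt}=2$, $g_{yy}=-1$, $g_{yt}=-v_x$, $g_{tt}=4u-4v_y-v_x^2$, giving $|\det g|=4$ and $\mathrm{vol}=2\,dx\wedge dy\wedge dt$, together with inverse components $g^{xt}=\tfrac12$, $g^{xy}=-\tfrac12 v_x$, $g^{xx}=v_y-u$, $g^{yy}=-1$; the cometric thus coincides with the principal symbol of the operator $P$ of (\ref{metricMS1}), a useful sanity check. Using these I would compute $*\mathcal{F}$ as a one-form and similarly expand $D\Phi+\tfrac12\omega\Phi$ with $\omega$ taken from (\ref{metricMS}). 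The $dx$-component of (\ref{Ward}) is the simplest: $\omega$ carries no $dx$ part, so the left side reduces to $\partial_x\Phi$ while the right side is a multiple of $\partial_x A_y$; this forces the identification $\Phi=-A_y$ up to the overall normalisation $A=2A_y$ that is then absorbed into the matrix $A$ of (\ref{MSext}). The identification $B=A_t-v_xA_y$ emerges as the combination that appears when the Hodge dual is rewritten, the shift by $-v_xA_y$ tracking the off-diagonal entry $g_{yt}=-v_x$. Substituting these expressions for $A$, $B$ and $\Phi$ into the remaining two components converts the $dy$-equation into $\partial_yA-\partial_xB=0$ and the $dt$-equation into the second line of (\ref{MSext}); the linear terms $u_xA$ and the potential terms $v_yK_{xx}-v_xK_{xy}$ of (\ref{mono}) arise from the interaction of $D\Phi$ with the cometric coefficients together with the $\tfrac12\omega\Phi$ contribution.

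The main obstacle is purely computational bookkeeping: the Hodge star on this Lorentzian three-metric mixes all three coordinate two-forms non-trivially, and one has to collect terms patiently to verify that the scalar background coefficients combine into exactly the Manakov-Santini-coupled connection equations (\ref{MSext}). The conformal weight $\tfrac12$ in the $\omega\Phi$ term is essential --- it is precisely the factor that causes the contribution of $\omega=v_{xx}dy+(-4u_x+2v_{xy}+v_xv_{xx})dt$ to assemble with the $D\Phi$ pieces into the $\partial_x(uK_x)$-type terms in (\ref{mono}), consistent with the conformal covariance noted after (\ref{Ward}). Once all components are written out and the substitutions made, the final matching reduces to a direct algebraic verification, recovering on the trivial background $u=v=0$ the Zakharov--Shabat/Ward form of the integrable chiral model.
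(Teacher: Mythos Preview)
Your plan is sound and will work; the difference from the paper's argument is one of organisation rather than substance.  The paper does not compute $*\mathcal{F}$ in the raw coordinates $(x,y,t)$.  Instead it passes to the null frame
\[
\mathbf{e}_1=\partial_x,\qquad \mathbf{e}_2=\partial_y,\qquad \mathbf{e}_3=\partial_t+(v_y-u)\partial_x-v_x\partial_y,
\]
with dual coframe $\mathbf{e}^1,\mathbf{e}^2,\mathbf{e}^3$, in which the metric is simply $g=-(\mathbf{e}^2)^2+4\,\mathbf{e}^1\mathbf{e}^3$.  The Hodge star is then trivial, and the curvature components are read off via the frame formula $\mathcal{F}(u,v)=\nabla_u\nabla_v-\nabla_v\nabla_u-\nabla_{[u,v]}$, so that the non-holonomy of the frame (the commutators $[\mathbf{e}_i,\mathbf{e}_j]$) automatically generates the background terms $v_x,\,v_y-u,\,u_x$ that you obtain instead from the off-diagonal cometric entries and from $\tfrac12\omega\Phi$.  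In the paper the first component of~(\ref{Ward}) comes out directly as $[\partial_x+A_x,\partial_y+A_y+\Phi]=0$, which is a zero-curvature condition in the $(x,y)$-plane and hence yields \emph{both} $A_x=0$ and $\Phi=-A_y$ in one gauge choice, whereas you impose $A_x=0$ first and then extract $\Phi=-A_y$ from the $dx$-equation (strictly speaking you get $\partial_x(\Phi+A_y)=0$ and must still invoke the residual $x$-independent gauge freedom to set the integration ``constant'' to zero --- a small point worth making explicit).  Your direct-coordinate route is perfectly legitimate and arguably more elementary, but the frame method buys a substantial reduction in the bookkeeping you flag as the main obstacle: in the null frame the Hodge dual of each basis two-form is a single basis one-form, so no patient collection of cross-terms is needed.
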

\begin{proof}
This proposition is proved directly, introducing
a basis (frame, vielbein) of vector fields and dual basis
of forms in which the metric takes a simple form
and writing down components of equation (\ref{Ward})
with respect to this basis, using a standard formula
\beaa
{F}({u},{v})=\nabla_{u}\nabla_{v}
-\nabla_{v}\nabla_{u}-
\nabla_{[{u},{v}]}
\eeaa
valid for arbitrary vector fields ${u}$, ${v}$.

Let us introduce a basis of vector fields
\begin{gather}
\begin{split}
\mathbf{e}_1&=\p_x,
\\
\mathbf{e}_2&=\p_y,
\\
\mathbf{e}_3&=\p_t + (v_y-u)\p_x -v_x\p_y,
\end{split}
\end{gather}
and dual basis of forms
\begin{gather}
\begin{split}
\mathbf{e}^1&=dx - (v_y-u)dt,
\\
\mathbf{e}^2&=dy + v_x dt,
\\
\mathbf{e}^3&=\p_t ,
\end{split}
\end{gather}
Then metric (\ref{metricMS}) takes the form
\bea
g=-(\mathbf{e}^2)^2+4\mathbf{e}^1 \mathbf{e}^3,
\eea
and symmetric bivector (\ref{metricMS}) (inverse metric) is
\bea
P=-(\mathbf{e}_2)^2+\mathbf{e}_1 \mathbf{e}_3.
\eea
First component of equation (\ref{Ward})
gives
\bea
[\p_x + A_x,\p_y + A_y + \Phi]=0,
\eea
that implies the existence of gauge in which
$A_x=0$, $A_y+\Phi=0$.

Second component reads
\beaa
-\p_y A_y - \tfrac{1}{2}v_{xx} A_y=\tfrac{1}{2}(-\p_x A_3
-v_{xx} A_y),
\eeaa
where $A_3=A_t-v_y A_y$, corresponding
to first equation of system (\ref{MSext}) and
leading to the existence of potential $K$,
$\p_x K=2 A_y$, $\p_y K= A_3.$

Third component 
\beaa
(2u_x -v_{xy})A_y
=
2(\p_t + (v_y-u)\p_x -v_x\p_y) A_y -\p_y A_3
-v_{yx}A_y + 2[A_3,A_y]
\eeaa
after some transformations takes the form 
of second equation of system (\ref{MSext})
\end{proof}
\begin{corollary}
There exist local coordinates and a gauge such that
for any Lorentzian Einstein-Weyl structure 
equation (\ref{Ward}) is locally of the form
(\ref{MSext}), (\ref{mono}).
\end{corollary}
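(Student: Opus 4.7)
The plan is to obtain the corollary as an immediate concatenation of Theorem \ref{theo_ms} with the preceding Proposition, so essentially no substantive new work is required. I would first invoke Theorem \ref{theo_ms}: starting from an arbitrary Lorentzian Einstein-Weyl structure on a three-dimensional manifold $M^3$, the cited result of Dunajski, Ferapontov and Kruglikov provides, around any chosen point, local coordinates $(x,y,t)$ in which the conformal metric and Weyl one-form assume the canonical form (\ref{metricMS}), with potentials $u,v$ satisfying the Manakov-Santini system (\ref{MS}).

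Next, with these coordinates fixed, the Ward-type equation (\ref{Ward}) is posed on a background whose $(g,\omega)$ is already written in the form (\ref{metricMS}). I would then apply the preceding Proposition directly: it supplies a gauge—characterized by $A_x=0$ and $\Phi=-A_y$—in which (\ref{Ward}) reduces to the matrix extension (\ref{MSext}) of the Manakov-Santini system, via the identifications $A=2A_y$ and $B=A_t-v_x A_y$. The potential formulation (\ref{mono}) is then recovered by introducing $K$ with $A=K_x$, $B=K_y$, consistent with the relations $\p_x K=2A_y$ and $\p_y K=A_t-v_y A_y$ that appear in the Proposition.

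There is no genuine obstacle specific to the corollary. The Proposition is already formulated in the coordinate system produced by Theorem \ref{theo_ms}, so the gauge constructed in the Proposition is automatically available in the patch delivered by the Theorem; no compatibility check between the two reductions is needed. The only analytic content—verifying via the frame $\mathbf{e}_1,\mathbf{e}_2,\mathbf{e}_3$ that the components of (\ref{Ward}) reproduce the equations of (\ref{MSext})—has already been carried out in the proof of the Proposition, so the corollary amounts to a one-line assembly of the two preceding results.
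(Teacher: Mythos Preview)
Your proposal is correct and matches the paper's intent: the corollary is stated without a separate proof precisely because it is the immediate concatenation of Theorem~\ref{theo_ms} with the preceding Proposition, exactly as you outline. No additional argument is required beyond the two-step assembly you describe.
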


\section{Manakov-Santini hierarchy and its
matrix extension}
\label{Ext}

The crucial object for matrix extension of involutive
distribution of
polynomial (meromorphic) in spectral parameter vector fields with
the basis $X_i$ is the matrix function $\Psi$ possessing the 
property that all the functions $\Psi^{-1}X_i\Psi$ are polynomial (meromorphic). The function $\Psi$
is suggested to be bounded and without zeroes in the
spectral plane and analytic in some neighborhood
of poles of vector fields.
Extended linear problems then read
\be
X_i\Psi=(\Psi^{-1}X_i\Psi)_+\Psi
\ee
where $(\Psi^{-1}X_i\Psi)_+$ is polynomial part
of the function represented as Laurent series
containing finite number of matrix fields
(meromorphic part in the case of multiple poles).
This function can be constructed using a matrix Riemann-Hilbert (RH) problem
of the form
\bea
\Psi_{+}=\Psi_{-}R(\psi_1,\dots,\psi_N),
\label{RH1}
\eea
defined on some oriented curve $\gamma$ in the complex plane,
or matrix $\dbar$ problem
\bea
\dbar\Psi=\Psi R(\psi_1,\dots,\psi_N),
\label{dbar1}
\eea
defined in some region $G$, where $\psi_k(\lambda,\mathbf{t})$ 
are wave functions for the distribution,
\be
X_i\psi_k=0 \quad \forall i,k
\ee
defined on $\gamma$ or in $G$. 
We suggest that solution $\Psi$ of RH (or $\dbar$) problem
is bounded
and has no zeroes, normalisation by 1 at infinity fixes the gauge
and leads to closed systems of equations for matrix coefficients
of extended Lax pairs. The dispersionless hierarchy corresponding
to integrable distribution with the basis $X_i$ plays the role of
the background.

Manakov-Santini hierarchy is defined by Lax-Sato equations \cite{LVB09}
\bea
&&
\frac{\partial}{\partial t_n}
\begin{pmatrix}
L\\
M
\end{pmatrix}
=
\left(\left(
\frac{ L^n L_p}{\{L,M\}}\right)_+
{\partial_x}
-\left(\frac{ L^n L_x}{\{L,M\}}\right)_+
{\partial_\lambda}\right)\begin{pmatrix}
L\\
M
\end{pmatrix},
\label{genSato1}
\eea 
where $L$, $M$, corresponding to Lax and Orlov functions of dispersionless KP hierarchy,
are the series
\bea
&&
L=\lambda+\sum_{n=1}^\infty u_n(\mathbf{t})\lambda^{-n},
\label{form01}
\\&&
M=M_0+M_1,\quad M_0=\sum_{n=0}^\infty t_n L^{n},
\nn \\&&
M_1=\sum_{n=1}^\infty v_n(\mathbf{t})L^{-n}=
\sum_{n=1}^\infty \tilde v_n(\mathbf{t})\lambda^{-n},
\label{form1}
\eea
and $x=t_0$, $(\sum_{-\infty}^{\infty}u_n \lambda^n)_+
=\sum_{n=0}^{\infty}u_n \lambda^n$, $\{L,M\}=L_\lambda M_x-L_x M_\lambda$.
A more standard choice of times for dKP hierarchy corresponds to 
$M_0$=${\sum_{n=0}^\infty (n+1)t_n L^{n}}$, it is easy to transfer to it 
by rescaling of times.

Lax-Sato equations  (\ref{genSato1}) are equivalent to the generating relation
\cite{BDM07,LVB09}
\be
\left(\frac{\d L\wedge \d M}{\{L,M\}}\right)_-=0,
\label{analyticity0}
\ee
where differential takes into account all times $\mathbf{t}$ and 
variable $\lambda$.

A dressing scheme for Manakov-Santini hierarchy can be formulated
in terms of two-component nonlinear Riemann problem on the unit circle $S$
in the complex plane of the variable $\lambda$,
\bea
\begin{aligned}
&L_\text{in}=F_1(L_\text{out},M_\text{out}),
\\
&M_\text{in}=F_2(L_\text{out},M_\text{out}),
\end{aligned}
\label{RiemannMS}
\eea
where the functions 
$L_\text{in}(\lambda,\mathbf{t})$, $M_\text{in}(\lambda,\mathbf{t})$ 
are analytic inside the unit circle,
the functions $L_\text{out}(\lambda,\mathbf{t})$, $M_\text{out}(\lambda,\mathbf{t})$ 
are analytic outside the
unit circle and have an expansion of the form (\ref{form01}), (\ref{form1}).
It is straightforward to demonstrate that  relation (\ref{RiemannMS}) 
implies analyticity of the differential form
$$
\omega=\frac{\d L\wedge \d M}{\{L,M\}}
$$
in the complex plane and generating relation (\ref{analyticity0}), thus defining
a solution of Manakov-Santini hierarchy. Considering a reduction to area-preserving
diffeomorphisms \text{SDiff(2)}, we obtain the dKP hierarchy. 

To obtain a gauge field extension of the hierarchy,
we introduce also a matrix
Riemann-Hilbert problem
\beaa
\Psi_\text{in}=\Psi_\text{out}R(L_\text{out},M_\text{out}),
\eeaa
$\Psi$ is normalised by 1 at infinity and analytic inside
and outside
the unit circle,
\be
\Psi_\text{out}=
1+\sum_{n=1}^\infty \Psi_n(\mathbf{t})\lambda^{-n}
\label{Psiser}
\ee
Expansions of $\Psi$, $L$, $M$ give coefficients for
extended vector fields, $\Psi$ is a wave function. A general
wave function is given by the expression
$\Psi F(L,M)$, $F$ is an arbitrary 
complex-analytic matrix-valued function.
From matrix
RH problem we get analyticity of the
matrix-valued form
\beaa
\Omega=\omega\wedge\d\Psi\cdot \Psi^{-1}
\eeaa
and 
of the functions $\Psi^{-1}X_i\Psi$,
leading to Lax-Sato equations for the series 
$\Psi$ (\ref{Psiser}), $L$ (\ref{form01}), $M$ (\ref{form1}),
defining the evolution of these series,
\begin{align*}
&
\frac{\partial}{\partial t_n}
\begin{pmatrix}
L\\
M
\end{pmatrix}
=V_n(\lambda)
\begin{pmatrix}
L\\
M
\end{pmatrix}
,
\\
&
\frac{\partial}{\partial t_n} {\Psi}=\left(V_n(\lambda)-
((V_n(\lambda)\Psi)\cdot\Psi^{-1})_+\right)\Psi,
\end{align*}
where vector fields $V_n(\lambda)$ are defined 
by formula (\ref{genSato1}).
First flows give exactly extended Lax pair
(\ref{MSLaxext}),
if we identify $y=t_1$, $t=t_2$.
\section{Toda type hierarchy and its 
matrix extension}
\label{Ext1}
Let us recall a picture of the hierarchy 
connected with system (\ref{gen2DTL}) \cite{LVB10}.
A complete set of Lax-Sato equations reads
\be
\begin{aligned}
&\left(
\frac{\partial_{x_n}}{n+1}
-\left(\frac{\l(\e^{(n+1)\Lambda})_\l}
{\{\Lambda,M\}}
\right)^\text{out}_+\p_t
+ 
\left(\frac{(\e^{(n+1)\Lambda})_t}
{\{\Lambda,M\}}
\right)^\text{out}_+
\l\p_\l
\right)
\begin{pmatrix}
\Lambda\\
M
\end{pmatrix}=0,
\\
&
\left(
\frac{\partial_{y_n}}{n+1}
+\left(\frac{\l(\e^{-(n+1)\Lambda})_\l}
{\{\Lambda,M\}}
\right)^\text{in}_-\p_t
- 
\left(\frac{(\e^{-(n+1)\Lambda^-})_t}
{\{\Lambda,M\}}
\right)^\text{in}_-
\l\p_\l
\right)
\begin{pmatrix}
\Lambda\\
M
\end{pmatrix}=0,
\end{aligned}
\label{Hi2}
\ee
where 
the definition of the Poisson bracket is $\{f,g\}=\l(f_\l g_t-f_t g_\l)$,
and we consider formal series
\begin{align}
&\Lambda^\text{out}=\ln\lambda+\sum_{k=1}^{\infty}l^+_k\lambda^{-k},\quad
\Lambda^\text{in}=\ln\lambda+\phi+\sum_{k=1}^{\infty}l^-_k\lambda^{k},
\label{Lform}
\\
&M^\text{out}=M_0^\text{out} + \sum_{k=1}^{\infty}m^+_k\e^{-k\Lambda^{+}},\quad
M^\text{in}=M_0^\text{in} + m_0 +\sum_{k=1}^{\infty}m^-_k\e^{k\Lambda^{-}},
\label{Mform}
\\
&
M_0=t+x\e^\Lambda+y\e^{-\Lambda}+\sum_{k=1}^{\infty}x_k\e^{(k+1)\Lambda}+
\sum_{k=1}^{\infty}y_k\e^{-(k+1)\Lambda},
\nn
\end{align}
where  $\lambda$ is a spectral variable. Usually we
suggest that `out' and `in' components of the series define the functions
outside and inside the unit circle in the complex plane of the variable
$\lambda$ respectively, 
with $\Lambda^\text{in}-\ln\lambda$, $M^\text{in}-M_0^\text{in}$ analytic
in the unit disc, and 
$\Lambda^\text{out}-\ln\lambda$, $M^\text{out}-M_0^\text{out}$ analytic
outside the unit disc and decreasing at infinity.
For a function on the complex plane, having a discontinuity on the unit circle, 
by `in' and `out' components we mean the function inside and outside the unit disc.

A dressing scheme for the  hierarchy (\ref{Hi2})
can be formulated
in terms of the two-component nonlinear Riemann-Hilbert problem on the unit circle $S$,
\bea
\begin{aligned}
&\Lambda^\text{out}=F_1(\Lambda^\text{in},M^\text{in}),
\\
&M^\text{out}=F_2(\Lambda^\text{in},M^\text{in}),
\end{aligned}
\label{Riemann}
\eea
where the functions 
$\Lambda^\text{out}(\l,\mathbf{x},\mathbf{y},t)$, 
$M^\text{out}(\l,\mathbf{x},\mathbf{y},t)$ 
are defined outside the unit circle,
the functions $\Lambda^\text{in}(\l,\mathbf{x},\mathbf{y},t)$, 
$M^\text{in}(\l,\mathbf{x},\mathbf{y},t)$ 
inside the
unit circle by the series of the form (\ref{Lform}), (\ref{Mform}),
with $\Lambda^\text{in}-\ln\lambda$, $M^\text{in}-M_0^\text{in}$ analytic
in the unit disc, and 
$\Lambda^\text{out}-\ln\lambda$, $M^\text{out}-M_0^\text{out}$ analytic
outside the unit disc and decreasing at infinity.

Lax-Sato equations for the times $x=x_1$, $y=y_1$
\beaa
&&
\partial_x
\begin{pmatrix}
\Lambda\\
M
\end{pmatrix}
=\left((\l+ (m_1^+)_t-l_1^+)\partial_t - 
\l l_1^+\partial_\l\right)
\begin{pmatrix}
\Lambda\\
M
\end{pmatrix}
\\
&&
\partial_y
\begin{pmatrix}
\Lambda\\
M
\end{pmatrix}
=\left(\frac{1}{\l}\frac{\mathrm{e}^{-\phi}}{m_t}\partial_t +
\frac{(\mathrm{e}^{-\phi})_t}{m_t}\partial_\l\right)
\begin{pmatrix}
\Lambda\\
M
\end{pmatrix}
,
\eeaa
$m=m_0+t$,
correspond to the Lax pair (\ref{Laxgen2DTL}), where the coefficients 
in the first Lax-Sato equation can be transformed to the form (\ref{Laxgen2DTL})
by taking its expansion at $\lambda=0$, and the system (\ref{gen2DTL}) arises as
a compatibility condition.

To obtain a matrix extension of the hierarchy,
we introduce also a matrix
Riemann-Hilbert problem
\bea
\Psi^\text{out}=
\Psi^\text{in}R(\Lambda^\text{in},M^\text{in}),
\label{RHT}
\eea
$\Psi$ is normalised by 1 at infinity and analytic inside
and outside
the unit circle,
\be
\Psi^\text{out}=
1+\sum_{n=1}^\infty \Psi^+_n(\mathbf{t})\lambda^{-n},
\quad
\Psi^\text{in}=
\sum_{n=0}^\infty \Psi^-_n(\mathbf{t})\lambda^{n}.
\label{Psiser1}
\ee
Lax-Sato equations for the series (\ref{Psiser1})
define the evolution of these series on the background
defined by Lax-Sato equations (\ref{Hi2})
\begin{align*}
&
\frac{\partial_{x_n}}{n+1}
\begin{pmatrix}
\Lambda
\\
M
\end{pmatrix}
=V^+_n(\lambda)
\begin{pmatrix}
\Lambda
\\
M
\end{pmatrix}
\\
&
\frac{\partial_{y_n}}{n+1}
\begin{pmatrix}
\Lambda
\\
M
\end{pmatrix}
=V^-_n(\lambda)
\begin{pmatrix}
\Lambda
\\
M
\end{pmatrix}
,
\\
&
\frac{\partial_{x_n}}{n+1}
{\Psi}=\left(V^+_n(\lambda)-
((V^+_n(\lambda)\Psi)\cdot\Psi^{-1})^\text{out}_+\right)
\Psi,
\\
&
\frac{\partial_{y_n}}{n+1}
{\Psi}=\left(V^-_n(\lambda)-
((V^-_n(\lambda)\Psi)\cdot\Psi^{-1})^\text{in}_-\right)
\Psi,
\end{align*}
where vector fields $V^+_n(\lambda)$, 
$V^-_n(\lambda)$ are defined 
by formula (\ref{Hi2}) and have coefficients  
polynomial respectively
in $\lambda$ and $\lambda^{-1}$.
\subsection*{Acknowledgements}
This work was performed in the framework of
State assignment topic 0033-2019-0006 (Integrable
systems of mathematical physics).


\begin{thebibliography}{99}

\bibitem{DFK14}M. Dunajski, E.V. Ferapontov and 
B. Kruglikov, On the Einstein-Weyl 
and conformal self-duality equations,
Journal of Mathematical Physics 56(8), 083501 (2015).

\bibitem{Penrose}R. Penrose, 
The nonlinear graviton and curved twistor theory, 
General Relativity and Gravitation {7}
(1976) 31-52.

\bibitem{Atiah}M. F. Atiyah, N. J. Hitchin and I. M. Singer, Self-duality in four dimensional Riemannian geometry,
Proc. Roy. Soc. London {\bf A 362} (1978) 425461.


\bibitem{BDM07}L.V. Bogdanov, V.S. Dryuma and S.V. Manakov,
Dunajski generalization of the second heavenly equation: 
dressing method and the hierarchy,
Journal of Physics A: Mathematical and Theoretical 
40 (48), 14383 (2007).

\bibitem{LVB11}
L.V. Bogdanov, Interpolating differential reductions of multidimensional integrable hierarchies, Theor. Math. Phys., 167(3), 705-713 (2011)

\bibitem{LVB17}L.V. Bogdanov, 
SDYM equations on the self-dual background, J. Phys. A 50, 19LT02 (2017)

\bibitem{ZS} V.E. Zakharov and A.B. Shabat, 
Integration of the nonlinear equations of mathematical 
physics by the method of the inverse scattering problem. 
II, Funk. Anal. Prilozh. 13 (3) 13-22 (1979) 
[Funct. Anal. Appl. 13, 166-174 (1979)].


\bibitem{Calderbank}David M.J. Calderbank, 
Integrable Background Geometries,  SIGMA 10 (2014), 034.

\bibitem{MS06}
S. V. Manakov and P. M. Santini,
The Cauchy problem on the plane for the dispersionless
Kadomtsev-Petviashvili equation,
\textit{JETP Lett.} \textbf{83} (2006) 462--466.

\bibitem{MS07}
S. V. Manakov and P. M. Santini,
A hierarchy of integrable PDEs in 2+1 dimensions associated with
2-dimensional vector fields,
\textit{Theor. Math. Phys.} \textbf{152} (2007) 1004--1011.

\bibitem{Ward}R.S. Ward, Soliton solutions in an integrable chiral model in
2 + 1 dimensions, J. Math. Phys. 29, 386389 (1988).

\bibitem{Dun}M. Dunajski, 
Solitons, Instantons, and
Twistors, Oxford University Press (2010)

\bibitem{LVB10}L.V. Bogdanov, Non-Hamiltonian generalizations of the dispersionless 2DTL hierarchy, J. Phys. A: Math. Theor. 43, 434008 (2010)

\bibitem{Dun3} M. Dunajski, \textit{An interpolating dispersionless integrable system},
J. Phys. A {\bf 4}, no. 31 (2008) 315202.

\bibitem{LVB09}L.V. Bogdanov, 
A class of multidimensional integrable hierarchies 
and their reductions,
Theoretical and Mathematical Physics 
160 (1), 887-893 (2009).
\end{thebibliography}
\end{document}